\newcommand{\be}{\begin{equation}}
\newcommand{\ee}{\end{equation}}
\newcommand{\bea}{\begin{eqnarray}}
\newcommand{\eea}{\end{eqnarray}}
\def\XXint#1#2#3{{\setbox0=\hbox{$#1{#2#3}{\int}$}
     \vcenter{\hbox{$#2#3$}}\kern-.5\wd0}}
\let\OLDthebibliography\thebibliography
\renewcommand\thebibliography[1]{
  \OLDthebibliography{#1}
  \setlength{\parskip}{0pt}
  \setlength{\itemsep}{0pt plus 0.3ex}
}
\newcommand{\sign}{\,{\rm sgn}\,}
\newcommand{\cc}{\mathfrak{c}}
\newcommand{\1}{\,\pmb{1}}
\newcommand{\D}[1]{{\rm d}#1}
\newtheorem{theorem}{Theorem}
\newtheorem{property}{Lemma}
\begin{document}

\begin{center}
{\Large\bf Regularization of energy-dependent pointlike interactions in 1D quantum mechanics
}
\end{center}
\begin{center}
Etienne Granet\textsuperscript{1}
\end{center}
\begin{center}
{\bf 1} Kadanoff Center for Theoretical Physics, University of Chicago, 5640 South Ellis Ave, Chicago, IL 60637, USA
\end{center}
%
\section*{Abstract}
{\bf 
We construct a family of hermitian potentials in 1D quantum mechanics that converges in the zero-range limit to a $\delta$ interaction with an \textit{energy-dependent} coupling. It falls out of the standard four-parameter family of pointlike interactions in 1D. Such classification was made by requiring the pointlike interaction to be hermitian. But we show that although our Hamiltonian is hermitian for the standard inner product when the range of the potential is finite, it becomes hermitian for a \textit{different} inner product in the zero-range limit. This inner product attributes a finite probability (and not probability density) for the particle to be exactly located at the position of the potential. 
Such pointlike interactions can then be used to construct potentials with a finite support with an energy-dependent coupling.


}

\section*{Introduction}
Pointlike interactions have long been part of the theoretical physics landscape, as idealized versions of  short-range potentials that make explicit computations possible, see e.g. \cite{kronig1931quantum,thomas1935the,breit1947the,wu1959ground,lieb1963exact}. Their most physical definition is through \textit{regularization}, namely as the zero-range limit of a family of regular potentials. In 1D the archetypal pointlike interaction is the $\delta$ potential \cite{Fermi1934SopraLS,bethe1935quantum}, to which any potential of the form $\tfrac{1}{a}V(\tfrac{x}{a})$ converges in the zero-range limit $a\to 0$, with coupling $\int V(x)\D{x}$ \cite{10.2307/24714232}. In higher dimensions the definition and regularization of the $\delta$ potential is more involved \cite{berezin1960remark,albeverio1979singular,albeverio1981point}. But in 1D there also exist other pointlike interactions that find applications and that require a more involved functional form and scaling \cite{cheon1999fermion,Yukalov2005FermiBoseMF,granet2022duality}. 

In this context, the question of classifying and regularizing pointlike interactions in 1D naturally arose \cite{Sen2003TheFL,SEBA1986111,kurasov1998finite,Seba1986TheGP,kurasov1996distribution,albeverio1998symmetries,Coutinho1997GeneralizedPI,carreau1999four,Romn1996TheRF,albeverio2000approximation,granet2022regularization}. An appealing approach to this classification is to determine the constraints on the resulting Hamiltonian once the zero-range limit is taken. Given a particle in a potential $V_a(x)$ such that $V_a(x)\to 0$ when $a\to 0$ at fixed $x\neq 0$, the following reasoning is tempting. In the zero-range limit $a\to 0$, the particle does not see any potential for $x\neq 0$, so its Hamiltonian should be merely $\partial_x^2$. The only possible remaining effect of the interaction  is thus to impose some boundary conditions on the wave function at the position of the potential. But since the Hamiltonian is hermitian for $a>0$ these boundary conditions are constrained by hermiticity. Such constructions are called self-adjoint extensions of the operator $\partial_x^2$ \cite{gitman2012self,albeverio1988solvable}, and they have been classified to be given by a four-parameter family of connection conditions on the wave function and its derivatives before and after the position of the potential \cite{Seba1986TheGP,kurasov1996distribution,albeverio1998symmetries}. Besides, the Hamiltonian for $a>0$ is time reversal-symmetric, which restricts this further to a three-parameter family in the zero-range limit \cite{albeverio1998symmetries}. Reciprocally, it has been shown that all this family of connection conditions can indeed be obtained through appropriate regularizations \cite{Cheon:1997rx,Romn1996TheRF}. Hence, according to this reasoning, pointlike interactions in 1D quantum mechanics obtained from a regular potential should exactly be this three-parameter family of connection conditions.  

But there is actually a subtlety in this reasoning. The weak point is that hermiticity is inner product-dependent \cite{SCHOLTZ199274}. Although the Hamiltonian is certainly hermitian for the standard inner product when $a>0$, it can become hermitian \textit{for a different} inner product in the zero-range limit, yielding more general connection conditions than previously argued. In this paper, we precisely construct a family of smooth potentials $V_a(x)$ that are hermitian for the standard inner product, but whose zero-range limit is \textit{not} described by the previous connection conditions. We show that in the limit $a\to 0$ it converges instead to a $\delta$ interaction with an \textit{energy-dependent} coupling. This means that the resulting eigenstates of the Hamiltonian satisfy the connection conditions of a $\delta$ interaction, but with an amplitude that depends on the energy of that eigenstate. Such a dependence on the energy is allowed only because the Hamiltonian becomes hermitian for a different inner product when $a\to 0$. This inner product attributes a finite probability (and not just probability density) for a particle to be located at the position of the potential. Similarly, we present as well a potential $V_a(x)$ that in the zero-range limit converges to the so-called "$\delta'$ interaction" between fermions, but with an energy-dependent coupling. These pointlike interactions can then be used to construct energy-dependent potentials that have a finite support.

We then discuss the dynamics of a particle interacting with this potential and the influence of the non-standard inner product. Because of this inner product, the particle has a tendency to stick to the potential, i.e. can be captured by the potential that acts like a pointlike potential well. When two particles interact with each other via this potential, this means a finite probability of forming a pointlike pair of particles, that is however not stable. We investigate such properties with the two pointlike interactions constructed in this paper.

We  note that quantum mechanics with non-hermitian Hamiltonians or with non-standard inner products has been the object of many studies in the past, see e.g. \cite{SCHOLTZ199274,Mostafazadeh:2001nr,mostafazadeh2010pseudo,ashida2020non}, in particular in $\mathcal{P}\mathcal{T}$-symmetric systems \cite{bender1998real,bender2007making,Mostafazadeh:2003gz} and open systems \cite{rotter2009topical,cohen1968}. But in our case, this unusual inner product results from the limit of Hamiltonians that are perfectly hermitian for the usual inner product. The fact that the pointlike interactions built in this paper fall out of the previously mentioned classification does not result from a choice of an unusual inner product in the first place. Being the zero-range limit of regular potentials, they are on the same footing as the three-parameter family presented above.

We also note that energy-dependent potentials in quantum mechanics have attracted interest for a long time, because relativistic equations can be seen sometimes as stationary Schr\"odinger equations with energy-dependent couplings, see e.g. \cite{RIZOV198559}. Pointlike interactions with an energy dependence have also been introduced before, in the form of higher-derivatives $\delta$ potentials \cite{Griffiths_1993,Coutinho_2005,coutinho2006energy,Lange2014DistributionTF}. However their definition therein is only formal and it is known that the connection conditions obtained by integrating the Schr\"odinger equation with higher derivatives $\delta$ functions are often incorrect, see e.g. \cite{SEBA1986111} and the appendix of \cite{granet2022duality}. To the best of our knowledge, our paper is the first to define energy-dependent potentials constructed from usual, regular and hermitian potentials in non-relativistic quantum mechanics.


\tableofcontents

\section{Definitions and reminders}
\subsection{Pointlike interaction \label{pointlie}}
The wave function $\psi(x,t)$ of a particle in a potential $V(x)$ in a ring $[-L/2,L/2]$ of size $L$ with periodic boundary condition satisfies the Schr\"odinger equation
\begin{equation}\label{schrodinger}
i\partial_t \psi=-\partial^2_x\psi+V(x) \psi\,.
\end{equation}
Let $V_a(x)$ a family of potentials parametrized by $a>0$. We say that it describes a pointlike interaction (or pointlike potential) in the limit $a\to 0$ if 
\begin{equation}
\underset{a\to 0}{\lim}\, \int_{-L/2}^{L/2} f(x) V_a(x)\D{x}=0\,,
\end{equation}
for any integrable function $f$ for which there exists $\epsilon>0$ such that $f(x)=0$ for $x\in[-\epsilon,\epsilon]$ \footnote{Imposing $V_a(x)\to 0$ at fixed $x\neq 0$ when $a\to 0$ is not enough. For example, $V_a(x)=\frac{1}{a}$ for $1<x<1+a$ satisfies this property, but produces a $\delta$ interaction at $x=1$ and not at $x=0$.}. The family $V_a(x)$ is then called a regularization of this pointlike interaction. Given a periodic initial condition $\psi(x,t=0)$, we define the time evolution of the wave function of the particle in this pointlike potential as the limit $a\to 0$ of the solution of \eqref{schrodinger} for $a>0$.

\subsection{Energy functional point of view}
The previous definition of pointlike interactions is a physical definition, since a pointlike potential is just an idealised view of a very short-range potential convenient to the theoretician. However, it does not provide a description of the system once the zero-range limit is taken. In this section, we present a simple and flexible framework for this.

Let us first reformulate the previous problem. We consider $\mathcal{H}$ the Hilbert space of smooth functions on $[-L/2,L/2]$ satisfying periodic boundary conditions, equipped with the standard inner product
\begin{equation}
\langle \phi|\psi\rangle=\int_{-L/2}^{L/2}\phi^*(x)\psi(x)\D{x}\,,
\end{equation}
and define the following energy functional for a real potential $V(x)$
\begin{equation}
\begin{aligned}\label{limit}
H(\psi)&=\int_{-L/2}^{L/2}|\psi'(x)|^2\D{x}+\int_{-L/2}^{L/2} V(x)|\psi(x)|^2\D{x}\,.
\end{aligned}
\end{equation}
We recall that the local minima of $H(\psi)$ at fixed norm $\langle \psi| \psi\rangle=1$ satisfy the stationary Schr\"odinger equation
\begin{equation}\label{schrodingers}
-\partial_x^2\psi(x)+V(x)\psi(x)=E\psi(x)\,,
\end{equation}
with $E$ constrained by the boundary conditions. It is then the value taken by the energy functional on this local minimum $H(\psi)=E$. We can then define a dynamics on the linear space $S$ spanned by the local minima of the energy functional $H$ the following way. If $\psi(x)$ is a local minimum with energy $H(\psi)=E$, then we define its time evolution $\psi(x,t)$ by
\begin{equation}
\psi(x,t)=\psi(x)e^{-iEt}\,,
\end{equation}
and extend this definition by linearity to the full vector space $S$. It follows that in $S$,  the time evolution of wave functions is governed by the Schr\"odinger equation \eqref{schrodinger}.\\ 

\subsubsection*{Example: $\delta$ interaction}
Let us illustrate now how pointlike interactions can be described in this framework. To that end, we consider the textbook example of the $\delta$ interaction. It is a pointlike interaction defined by the regularization $V_a(x)=\frac{c}{a}\1_{|x|<a/2}$ in \eqref{limit}, for $c$ real. The Schr\"odinger equation \eqref{schrodingers} can be straightforwardly solved for $a>0$ since the potential is constant by part, from which one can deduce the form of the wavefunction in the pointlike limit $a\to 0$. It is well-known that the eigenstates in the limit $a\to 0$ satisfy
\begin{equation}\label{cond}
\begin{aligned}
\psi''(x)+E\psi(x)&=0\,,\qquad \text{for }x\neq 0\\
\psi(0+)-\psi(0-)&=0\\
\psi'(0+)-\psi'(0-)&=c\psi(0)\,,
\end{aligned}
\end{equation}
with $E$ constrained by the boundary conditions.\\

Let us show now that these boundary conditions exactly correspond to the local minima of the energy functional
\begin{equation}
H(\psi)=\int_{-L/2}^{L/2}|\psi'(x)|^2\D{x}+c|\psi(0)|^2\,,
\end{equation}
at fixed unit norm $\langle \psi|\psi\rangle =1$. Introducing a Lagrange multiplier $E$, the problem is equivalent to minimizing
\begin{equation}
F(\psi,E)=H(\psi)-E(\langle \psi|\psi\rangle-1)\,,
\end{equation}
without constraining $\psi$. Indeed, minimizing with respect to $E$ imposes $\langle \psi|\psi\rangle=1$, while minimizing with respect to $\psi$ minimizes then $H(\psi)$. We have for a continuous perturbation $\delta\psi$
\begin{equation}
\begin{aligned}
F(\psi+\delta\psi,E)=&F(\psi,E)+2\Re \int_{-L/2}^{L/2}(\psi'(x))^* \delta \psi'(x)\D{x}-2E\Re \int_{-L/2}^{L/2}\psi(x)^* \delta \psi(x)\D{x}\\
&+2c\Re  \psi(0)^* \delta\psi(0)+\mathcal{O}(\delta\psi^2)\,.
\end{aligned}
\end{equation}
Let us assume that there are points $y_j$ where $\psi'$ is discontinuous. Then performing an integration by part
\begin{equation}
\begin{aligned}
F(\psi+\delta\psi,E)&-F(\psi,E)=-2\Re \int_{-L/2}^{L/2}(\psi''(x)+E\psi(x))^* \delta \psi(x)\D{x}\\
&+2c\Re  \psi(0)^* \delta\psi(0)-2\Re \sum_j (\psi'(y_j+)-\psi'(y_j-))^*\delta\psi(y_j)\\
&+\mathcal{O}(\delta\psi^2)\,.
\end{aligned}
\end{equation}
It follows that in order to make vanish all the $\delta\psi$ terms, one requires a continuous $\psi'(x)$ when $x\neq 0$, and
\begin{equation}
\psi'(0+)-\psi'(0-)=c\psi(0)\,,
\end{equation}
as well as for $x\neq 0$
\begin{equation}
\psi''(x)+E\psi(x)=0\,.
\end{equation}

\subsection{Standard pointlike interactions\label{self}}
%
%

We briefly recall the reasoning made in the introduction. The eigenstates of a particle interacting with a pointlike potential at $x=0$ according to the definition of Section \ref{pointlie}, for any $\epsilon>0$ should satisfy a free Schrodinger equation $\psi''(x)+E\psi(x)=0$ for $|x|>\epsilon$. Hence the only possible effect of the pointlike interaction is to impose some boundary conditions for $x>0$ and $x<0$. Since the Hamiltonian is hermitian for $a>0$, it is tempting to look for connection conditions that will make the resulting pointlike interaction hermitian. We will call such settings a standard pointlike interaction at $x=0$. In the math terminology, this corresponds to self-adjoint extensions of the operator $\partial_x^2$ on the space of smooth functions on $[-L/2,L/2]\setminus\{0\}$ \cite{gitman2012self}. It is known that the only such possible connection conditions are \cite{Seba1986TheGP,kurasov1996distribution,albeverio1998symmetries}
\begin{equation}\label{sa}
\begin{cases}
\text{either }\qquad\left(\begin{matrix}\psi(0+)\\ \psi'(0+)\end{matrix}\right)=e^{i\theta}\left(\begin{matrix}a&b\\ c&d\end{matrix}\right)\left(\begin{matrix}\psi(0-)\\ \psi'(0-)\end{matrix}\right)\\
\text{or }\qquad \psi'(0+)=h_+\psi(0+)\,,\qquad \psi'(0-)=h_-\psi(0-)\,,
\end{cases}
\end{equation}
with $\theta,a,b,c,d,h_\pm$ real parameters satisfying $ad-bc=1$. The second possibility in \eqref{sa} can be obtained as particular limits of the first possibility when some parameters go to $\pm\infty$. This leaves thus a four-parameter family. If now time-reversal invariance is imposed, it constrains $\theta=0$, leaving a three-parameter family \cite{albeverio1998symmetries}. Besides, the connection conditions \eqref{sa} are indeed pointlike interactions in the sense of Section \ref{pointlie}, i.e. they can be obtained as limits of regular potentials \cite{Cheon:1997rx}. 

In the following Sections we precisely show that not all pointlike interactions are standard pointlike interactions, i.e. there exist pointlike interactions with more general connection conditions than \eqref{sa}.

\section{Energy-dependent pointlike interaction between bosons \label{running}}

\subsection{The potential}
The first energy-dependent pointlike interaction that we present acts like a $\delta$ potential with an energy-dependent coupling for bosons, and is transparent for fermions. Its construction is given by the following Theorem.

\begin{theorem}\label{thm}
Let $c\geq 0, \nu\geq 0$. Let $\sigma(x)$ a smooth odd increasing function such that $\sigma(x)\to 1$ when $x\to\infty$, and $\Delta(x)$ a smooth positive even function that satisfies
\begin{equation}
\begin{aligned}
&\int_{-\infty}^{\infty}\Delta(x)\D{x}=1\,,\qquad \int_{-\infty}^{\infty}\sqrt{\Delta(x)}\D{x}<\infty\,,\qquad \max(x^2\Delta(x))<\infty\,.
\end{aligned}
\end{equation}
[For example, one can take $\sigma(x)=\tanh(x)$ and $\Delta(x)=\tfrac{1}{2}\tanh'(x)$.] For $E$ real and $a>0$, we define $\psi_a(x)$ as the  solution to
\begin{equation}\label{shrod}
-\psi_a''(x)+V_a(x)\psi_a(x)=E\psi_a(x)\,,
\end{equation}
with some fixed $a$-independent boundary conditions at $x=1$, and where the potential is
\begin{equation}\label{vax}
V_a(x)=\frac{v_a''(x)}{v_a(x)}\,,\qquad v_a(x)=1+\frac{c}{2}x\sigma(\tfrac{x}{a})+\nu\sqrt{\tfrac{1}{a}\Delta(\tfrac{x}{a})}\,.
\end{equation}
The limit $\psi(x)\equiv \underset{a\to 0}{\lim}\, \psi_a(x)$ satisfies
\begin{equation}\label{new}
\begin{aligned}
&\psi''(x)+E\psi(x)=0\quad {\rm for }\,\,x\neq 0\\
&\left(\begin{matrix}\psi(0+)\\ \psi'(0+)\end{matrix}\right)=\left(\begin{matrix}1&0\\ c-E\nu^2&1\end{matrix}\right)\left(\begin{matrix}\psi(0-)\\ \psi'(0-)\end{matrix}\right)\,.
\end{aligned}
\end{equation}
\end{theorem}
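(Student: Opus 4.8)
The plan is to exploit the structure $V_a = v_a''/v_a$, which means that $v_a(x)$ is itself a zero-energy solution of the Schr\"odinger equation \eqref{shrod}. Indeed, if $\psi_a$ solves $-\psi_a'' + V_a\psi_a = E\psi_a$, then writing $\psi_a(x) = v_a(x)\chi_a(x)$ converts the equation into $-(v_a^2\chi_a')' = E v_a^2 \chi_a$, i.e. $\chi_a$ satisfies a Sturm--Liouville equation with weight $v_a^2$ and no potential. The first step is therefore to perform this substitution and to track what happens to $\chi_a$ and to $v_a$ separately in the limit $a\to 0$.

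Away from the origin, $v_a(x)\to v_0(x) := 1 + \tfrac{c}{2}|x|$ pointwise (using $\sigma$ odd and $\sigma(x/a)\to \sign(x)$, and the normalization $\int\Delta = 1$ together with $\max(x^2\Delta)<\infty$ forcing the $\nu$-term to vanish pointwise for $x\neq 0$), so for $x\neq 0$ the limiting equation for $\psi = v_0\chi$ is $-(v_0^2\chi')' = E v_0^2\chi$; but since $v_0$ is piecewise linear, $v_0'' = c\,\delta$ in the distributional sense away from which $v_0''=0$, so $\psi$ satisfies $\psi'' + E\psi = 0$ on each half-line, giving the first line of \eqref{new}. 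The nontrivial content is the connection matrix. Here the key is to integrate the equation $-(v_a^2\chi_a')' = E v_a^2\chi_a$ across a window $[-\delta,\delta]$ with $a\ll\delta\ll 1$: one gets $v_a^2\chi_a'\big|_{-\delta}^{\delta} = -E\int_{-\delta}^{\delta} v_a^2\chi_a$. The right-hand side is where the $\nu$-term contributes: $\int v_a^2 \approx \int (1 + \nu^2\tfrac{1}{a}\Delta(\tfrac{x}{a}) + \text{cross terms}) = 2\delta + \nu^2 + o(1)$ since $\int \tfrac1a\Delta(\tfrac xa) = 1$, and the cross term $\int 2\nu\sqrt{\tfrac1a\Delta(\tfrac xa)}\,dx = 2\nu a^{1/2}\int\sqrt{\Delta} \to 0$ by the second hypothesis on $\Delta$. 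Meanwhile $v_a^2\chi_a'$ reduces, away from the bump, to $\psi'$ roughly, and the $\tfrac{c}{2}x\sigma$ term contributes the jump $c$ in the derivative through $v_a'' \to c\,\delta$. Assembling, and using continuity of $\psi$ at $0$ (which must be checked: $v_a$ blows up like $a^{-1/2}$ at the origin, but $\chi_a$ is correspondingly flattened, and one shows $\psi_a(0\pm)$ have a common limit), yields the jump $\psi'(0+) - \psi'(0-) = (c - E\nu^2)\psi(0)$, which is exactly the second line of \eqref{new}.

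Carrying this out rigorously requires controlling $\chi_a$ uniformly. The cleanest route is to change variables to the "natural" coordinate $s_a(x) = \int_0^x v_a(t)^{-2}\,dt$, in which the equation becomes $-\ddot\chi_a = E\, v_a(x(s))^4 \dot{}\,$— actually better: in the variable $s$ the operator $-(v_a^2\chi')'$ becomes $-v_a^{-2}\partial_s^2$, so $\chi_a$ solves $-\partial_s^2\chi_a = E v_a^2\chi_a$ as a function of $s$; one then shows $s_a$ converges to a limiting change of variable (with a finite "defect" $\nu^2$ concentrated at $s=0$ coming from $\int v_a^2\,$ vs $\int ds$), and that the potential-like term $E v_a^2$ stays bounded in $L^1$ with a $\delta$-like piece of mass $\nu^2 E$ at the origin. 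Then $\chi_a$ and $\dot\chi_a$ converge by a standard Gr\"onwall/continuity argument for ODEs with $L^1$ coefficients, and reading off the transfer matrix across $s=0$ gives \eqref{new}.

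The main obstacle I expect is the interplay at the origin between the diverging factor $v_a(0)\sim \nu a^{-1/2}\sqrt{\Delta(0)}$ and the fact that one cannot simply set $x=0$: one must show that $\psi_a = v_a\chi_a$ stays bounded and continuous there, i.e. that $\chi_a(0) \to 0$ at precisely the compensating rate, and simultaneously that the derivative jump picks up the finite term $-E\nu^2\psi(0)$ rather than something divergent or zero. Handling this requires the hypotheses $\int\sqrt{\Delta}<\infty$ (kills the cross term) and $\max(x^2\Delta)<\infty$ (ensures $v_a\to v_0$ pointwise away from $0$ and that $\int_{|x|<\delta} x^2\, v_a^2$ is controlled so no spurious contribution survives) in an essential way; getting these estimates tight is the crux of the argument. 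Everything else — the distributional identity $v_0'' = c\delta$ giving the $c$ in the matrix, and linearity in $E$ — is bookkeeping once the limit of $\chi_a$ is under control.
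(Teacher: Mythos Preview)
Your reduction $\psi_a = v_a\chi_a$ leading to $-(v_a^2\chi_a')' = Ev_a^2\chi_a$ is essentially the paper's approach in different clothing: the paper uses variation of parameters with the two zero-energy solutions $v_a$ and $w_a(x) = v_a(x)\int_0^x v_a(t)^{-2}\,dt$ (so your coordinate $s_a$ is exactly $w_a/v_a$), writes a self-consistent integral equation for $\psi_a$, subtracts off the singular piece $A\nu\sqrt{\tfrac{1}{a}\Delta(\tfrac{x}{a})}$ to get a uniformly bounded remainder $\bar\psi_a$, and applies a Gr\"onwall bound plus dominated convergence. The two routes use the hypotheses on $\Delta$ at the same places.

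There is, however, a concrete misconception in your picture of the origin that would derail the computation if carried through as stated. You write that one must show $\psi_a$ stays bounded at $0$ and hence $\chi_a(0)\to 0$ to compensate $v_a(0)\sim \nu\sqrt{\Delta(0)/a}$. In fact $\psi_a(0)$ does \emph{not} stay bounded: it diverges at the same rate $a^{-1/2}$ as $v_a(0)$ (this is precisely the piece the paper subtracts to form $\bar\psi_a$), so that $\chi_a(0)=\psi_a(0)/v_a(0)$ tends to a \emph{finite nonzero} limit, namely $\psi(0)$, the common value of $\psi(0\pm)$. This is not cosmetic: your right-hand side is $-E\int_{-\delta}^{\delta} v_a^2\chi_a$, and the mechanism by which it produces $-E\nu^2\psi(0)$ is that $\chi_a$ is approximately \emph{constant}, equal to $\psi(0)$, on the window (indeed $|\chi_a'|\le |v_a^2\chi_a'|/v_a^2$ with $v_a\ge 1$ and $v_a^2\chi_a'$ of bounded variation from the equation), so the integral factors as $\approx\chi_a(0)\int v_a^2 \to \psi(0)\,\nu^2$. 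Under your assumption $\chi_a(0)\to 0$, the integral would vanish and no energy-dependent term would appear. The uniform bound you actually need is on $\chi_a$ (equivalently on the paper's $\bar\psi_a$), not on $\psi_a$. A minor correction in passing: in your $s$-variable the equation reads $-\partial_s^2\chi_a = E\,v_a^4\,\chi_a$, not $v_a^2$; since $\int v_a^4\,ds = \int v_a^2\,dx$ is still $O(1)$, the Gr\"onwall argument survives.
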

Before entering the proof, let us make the following comment on the connection conditions \eqref{new}. If the wave function describes two bosons interacting via $V_a(x)$ in the center of mass frame, then $\psi(x)$ has to be even, so $\psi(0+)=\psi(0-)$ and $\psi'(0+)=-\psi'(0-)$. This leads to the jump condition
\begin{equation}
\begin{aligned}
\psi'(0+)-\psi'(0-)&=(c-E\nu^2)\psi(0)\,,
\end{aligned}
\end{equation}
which is exactly like a $\delta$ potential with an energy-dependent coupling. Now, if the wave function describes fermions, it has to be odd and so $\psi(0+)=-\psi(0-)$ and $\psi'(0+)=\psi'(0-)$. The connection conditions \eqref{new} are then automatically satisfied, meaning that this pointlike potential is transparent for fermions, i.e. has no effect.

Besides, we give a plot of the potential $V_a(x)$ in the right panel of Figure \ref{potential1}.

\begin{proof}[Proof of Theorem \ref{thm}]
Two independent solutions of \eqref{shrod} for $E=0$ are $v_a(x)$ which is even, and
\begin{equation}
w_a(x)=v_a(x)\int_0^x \frac{\D{t}}{v_a(t)^2}\,,
\end{equation}
which is odd. Hence the solutions to
\begin{equation}
-\psi_a''(x)+V_a(x)\psi_a(x)=f(x)\,,
\end{equation}
for $f(x)$ an arbitrary even smooth function, are
\begin{equation}
\psi_a(x)=v_a(x)\int_0^x f(y)w_a(y)\D{y}-w_a(x)\int_0^x f(y)v_a(y)\D{y}+A_av_a(x)+B_aw_a(x)\,.
\end{equation}
with $A_a,B_a$ integration constants depending on the boundary conditions. It follows that $\psi_a(x)$ satisfies the self-consistency equation
\begin{equation}
\begin{aligned}
\psi_a(x)=Ev_a(x)\int_0^x \psi_a(y)w_a(y)\D{y}-Ew_a(x)\int_0^x \psi_a(y)v_a(y)\D{y}+A_av_a(x)+B_aw_a(x)\,.
\end{aligned}
\end{equation}
Let us argue that one can assume that $A_a,B_a$ are independent of $a$. Indeed, by linearity of this integral equation, one can focus on the two cases $A_a=0$ (odd solution) and $B_a=0$ (even solution). Then, focusing e.g. on the case $B_a=0$, if we define $\tilde{\psi}_a$ as the solution to this integral equation with an $A$ fixed independent of $a$, one has $\psi_a(x)=\tfrac{A_a}{A}\tilde{\psi}_a(x)$. Hence this provides a solution to the problem provided $A_a=A\tfrac{\psi_a(1)}{\tilde{\psi}_a(1)}$ has a finite limit when $a\to 0$. 

We will thus assume from now on that $A_a=A$ and $B_a=B$ are independent of $a$. Defining then
\begin{equation}
\bar{\psi}_a(x)=\psi_a(x)-A\nu\sqrt{\tfrac{1}{a}\Delta(\tfrac{x}{a})}\,,
\end{equation}
it satisfies
\begin{equation}\label{rec}
\begin{aligned}
\bar{\psi}_a(x)&=Ev_a(x)\int_0^x \bar{\psi}_a(y)w_a(y)\D{y}-Ew_a(x)\int_0^x \bar{\psi}_a(y)\bar{v}_a(y)\D{y}\\
&-E\nu w_a(x)\int_0^x \bar{\psi}_a(y)\sqrt{\tfrac{1}{a}\Delta(\tfrac{y}{a})}\D{y}\\
&+EA\nu v_a(x)\int_0^x \sqrt{\tfrac{1}{a}\Delta(\tfrac{y}{a})}w_a(y)\D{y}-EA\nu w_a(x)\int_0^x\sqrt{\tfrac{1}{a}\Delta(\tfrac{y}{a})}\bar{v}_a(y)\D{y}\\
&-EA\nu^2 w_a(x)\int_0^x \tfrac{1}{a}\Delta(\tfrac{y}{a})\D{y}+A\bar{v}_a(x)+Bw_a(x)\,,
\end{aligned}
\end{equation}
with
\begin{equation}
\bar{v}_a(x)=1+\frac{c}{2}x\sigma(\tfrac{x}{a})\,.
\end{equation}
To go further, we prove the following Lemma.
\begin{property}\label{limconv}If $|f_a(x)|$ is bounded independently of $a$ and $x\in[-L/2,L/2]$, then
\begin{equation}
\underset{a\to 0}{\lim}\,  \int_{-L/2}^{L/2}f_a(x)\sqrt{\tfrac{1}{a}\Delta(\tfrac{x}{a})}\D{x}=0\,.
\end{equation}
\end{property}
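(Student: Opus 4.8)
The plan is to exploit the rescaling $x = ay$ together with the stated integrability assumption $\int_{-\infty}^\infty \sqrt{\Delta(x)}\,\D{x} < \infty$. First I would change variables: set $x = ay$, so that
\begin{equation}
\int_{-L/2}^{L/2} f_a(x)\sqrt{\tfrac{1}{a}\Delta(\tfrac{x}{a})}\,\D{x} = \int_{-L/(2a)}^{L/(2a)} f_a(ay)\sqrt{\tfrac{1}{a}\Delta(y)}\,a\,\D{y} = \sqrt{a}\int_{-L/(2a)}^{L/(2a)} f_a(ay)\sqrt{\Delta(y)}\,\D{y}.
\end{equation}
Then I would bound the absolute value of the integrand using the uniform bound $|f_a(x)| \le M$ for all $a$ and $x \in [-L/2,L/2]$, giving
\begin{equation}
\left|\int_{-L/2}^{L/2} f_a(x)\sqrt{\tfrac{1}{a}\Delta(\tfrac{x}{a})}\,\D{x}\right| \le \sqrt{a}\,M \int_{-\infty}^{\infty}\sqrt{\Delta(y)}\,\D{y}.
\end{equation}
Since $\int_{-\infty}^\infty \sqrt{\Delta(y)}\,\D{y}$ is finite by hypothesis, the right-hand side is $\mathcal{O}(\sqrt{a})$ and hence tends to $0$ as $a\to 0$. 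That is the whole argument.

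The only subtlety worth spelling out is why the uniform bound on $f_a$ on the \emph{finite} interval $[-L/2,L/2]$ suffices even though after rescaling the domain of integration becomes the growing interval $[-L/(2a),L/(2a)]$: the point is precisely that $f_a(ay)$ for $y$ in that range corresponds to $f_a$ evaluated at points $x=ay \in [-L/2,L/2]$, so the same constant $M$ controls it, and the extension of the $y$-integration to all of $\mathbb{R}$ only enlarges the (nonnegative) integrand $\sqrt{\Delta(y)}$, which is still integrable. No other property of $\Delta$ (evenness, positivity, the normalization $\int \Delta = 1$, or the $x^2\Delta$ bound) is needed here — only $\int \sqrt{\Delta} < \infty$. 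I do not anticipate any real obstacle; the lemma is essentially a one-line consequence of the scaling and the assumption, and the slight care needed with the rescaled integration limits is the only thing to get right.
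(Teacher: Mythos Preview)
Your proposal is correct and follows exactly the same approach as the paper: bound $|f_a|$ by $M$, rescale $x=ay$ to produce the factor $\sqrt{a}$, and use the finiteness of $\int_{-\infty}^{\infty}\sqrt{\Delta(y)}\,\D{y}$ to conclude. The paper's proof is the same computation compressed into a single displayed inequality.
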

\begin{proof}
Denoting $M$ the bound on $f_a$, we have
\begin{equation}
\left| \int_{-L/2}^{L/2}f_a(x)\sqrt{\tfrac{1}{a}\Delta(\tfrac{x}{a})}\D{x}\right|\leq M  \sqrt{a}\int_{-L/(2a)}^{L/(2a)}\sqrt{\Delta(x)}\D{x}\,,
\end{equation}
which goes to zero when $a\to 0$ from the assumption $\int_{-\infty}^{\infty}\sqrt{\Delta(x)}\D{x}<\infty$.
\end{proof}
This will be useful to prove the following Lemma.
\begin{property}\label{uniform}
The functions $w_a(x)$ and $\bar{\psi}_a(x)$ are bounded independently of $a<1$ and $x\in[-L/2,L/2]$.
\end{property}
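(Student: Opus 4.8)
\emph{Plan.} The plan is to bound $w_a$ by a direct estimate and then bound $\bar\psi_a$ by turning the self-consistency equation \eqref{rec} into a Gronwall-type inequality. Throughout, write $v_a=\bar v_a+\phi_a$ with $\phi_a(x):=\nu\sqrt{\tfrac1a\Delta(\tfrac xa)}$, and record the consequences of the hypotheses that will be used repeatedly: $v_a(x)\ge\bar v_a(x)\ge1$ everywhere (since $x\sigma(x/a)\ge0$ and $c,\nu,\Delta\ge0$); on $[-L/2,L/2]$, $0\le x\sigma(x/a)\le|x|$, so $\bar v_a(x)\le M_{\bar v}:=1+\tfrac{cL}{4}$; and, with $C_\Delta:=\sup_u u^2\Delta(u)<\infty$ (a hypothesis) and $D_\Delta:=\sup_u\Delta(u)<\infty$ (automatic, as $\Delta$ is continuous and $\le C_\Delta/u^2$ for $|u|\ge1$), the smallness bounds $|x|\,\phi_a(x)\le\nu\sqrt{a\,C_\Delta}$, $\sqrt a\,\phi_a(x)=\nu\sqrt{\Delta(x/a)}\le\nu\sqrt{D_\Delta}$, and $\int_{-L/2}^{L/2}\sqrt{\tfrac1a\Delta(\tfrac ya)}\,\D{y}\le\sqrt a\int_{-\infty}^{\infty}\sqrt\Delta<\infty$ — the last essentially being Lemma~\ref{limconv}.

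\emph{Bounding $w_a$.} Since $v_a\ge1$ we have $\big|\int_0^x v_a(t)^{-2}\,\D{t}\big|\le|x|$, hence $|w_a(x)|=v_a(x)\,\big|\int_0^x v_a(t)^{-2}\,\D{t}\big|\le v_a(x)\,|x|$. Using $0\le x\sigma(x/a)\le|x|$ this gives $v_a(x)|x|\le|x|+\tfrac c2 x^2+|x|\,\phi_a(x)\le\tfrac L2+\tfrac{cL^2}{8}+\nu\sqrt{C_\Delta}$ for $a<1$, which is the asserted bound; in particular $|x|\,v_a(x)$ is itself bounded on $[-L/2,L/2]$ uniformly in $a<1$.

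\emph{Bounding $\bar\psi_a$.} Fix $a<1$. Since $V_a=v_a''/v_a$ is continuous and $\phi_a$ smooth, $\bar\psi_a$ is continuous, so $\Phi_a(x):=\sup_{|y|\le|x|}|\bar\psi_a(y)|$ is finite and non-decreasing. The goal is to show, for $x\in[0,L/2]$ (the range $x\le0$ being symmetric), that
\[
\Phi_a(x)\le K_0+\varepsilon(a)\,\Phi_a(x)+K_1\int_0^x\Phi_a(y)\,\D{y}
\]
with $K_0,K_1$ independent of $a<1$ and $\varepsilon(a)=\mathcal O(\sqrt a)$, and then conclude by Gronwall. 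The terms of \eqref{rec} split into three groups. (i) Those not involving $\bar\psi_a$: $A\bar v_a(x)$ and $Bw_a(x)$ are bounded by the previous step; each term with $w_a(x)$ in front of a $\sqrt{\tfrac1a\Delta}$- or $\tfrac1a\Delta$-weighted integral is bounded because $|w_a|$ is bounded and those integrals are $\mathcal O(\sqrt a)$, respectively $\le\int_{-\infty}^{\infty}\Delta=1$; and $EA\nu\,v_a(x)\int_0^x\sqrt{\tfrac1a\Delta(\tfrac ya)}\,w_a(y)\,\D{y}$ is bounded since, after $y\mapsto ay$, the integral equals $\sqrt a\int_0^{x/a}\sqrt{\Delta(s)}\,w_a(as)\,\D{s}=\mathcal O(\sqrt a)$ while $v_a(x)\sqrt a\le M_{\bar v}\sqrt a+\nu\sqrt{D_\Delta}$ is bounded. (ii) Honest Gronwall kernels, each $\le|E|M_wM_{\bar v}\int_0^x|\bar\psi_a(y)|\,\D{y}$: the term $-Ew_a(x)\int_0^x\bar\psi_a\bar v_a\,\D{y}$, and the $\bar v_a$-piece of $Ev_a(x)\int_0^x\bar\psi_a w_a\,\D{y}$ obtained by writing $v_a=\bar v_a+\phi_a$. (iii) Small perturbations: the remaining piece $E\phi_a(x)\int_0^x\bar\psi_a(y) w_a(y)\,\D{y}$, which is $\le|E|M_w\big(|x|\phi_a(x)\big)\Phi_a(x)\le|E|M_w\nu\sqrt{a\,C_\Delta}\,\Phi_a(x)$, and $-E\nu w_a(x)\int_0^x\bar\psi_a(y)\sqrt{\tfrac1a\Delta(\tfrac ya)}\,\D{y}$, which is $\le|E|\nu M_w\big(\sqrt a\int\sqrt\Delta\big)\Phi_a(x)$. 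Together these give the displayed inequality with $\varepsilon(a)=\mathcal O(\sqrt a)$. For $a\le a_0$ with $\varepsilon(a_0)\le\tfrac12$, Gronwall gives $\Phi_a(x)\le 2K_0e^{2K_1L}$; for $a\in[a_0,1)$ the weights $\phi_a(x)$ and $\sqrt{\tfrac1a\Delta(\tfrac xa)}$ are bounded by $\nu\sqrt{D_\Delta/a_0}$, so the group-(iii) terms also become honest Gronwall kernels with an $a_0$-dependent but $a$-uniform constant and Gronwall again gives a uniform bound. In either regime $\Phi_a$ is bounded uniformly in $a<1$, which proves the Lemma.

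\emph{Main obstacle.} The one genuinely delicate point is the bare, unbounded factor $v_a(x)$ in front of integrals over $[0,x]$ in \eqref{rec} — chiefly $Ev_a(x)\int_0^x\bar\psi_a w_a\,\D{y}$ and $EA\nu v_a(x)\int_0^x\sqrt{\tfrac1a\Delta(\tfrac ya)}w_a\,\D{y}$. These survive only because $v_a(x)$ is large solely in an $\mathcal O(a)$-neighbourhood of the origin, over which $\int_0^x(\cdots)\,\D{y}$ is correspondingly short; quantitatively this is precisely the content of $|x|\phi_a(x)\le\nu\sqrt{a\,C_\Delta}$ and $\sqrt a\,\phi_a(x)\le\nu\sqrt{D_\Delta}$, which is where the hypotheses $\sup_u u^2\Delta(u)<\infty$ and (implicitly) the boundedness of $\Delta$ enter. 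A secondary technicality is that the Gronwall step only closes after absorbing the $\varepsilon(a)\Phi_a(x)$ term, i.e.\ for $a$ small; the remaining values of $a$ are handled separately by observing that the singular weights are then merely bounded.
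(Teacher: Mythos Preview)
Your proof is correct and follows the same overall strategy as the paper: bound $w_a$ directly via $|w_a(x)|\le|x|\,v_a(x)$, then control $\bar\psi_a$ by a Gronwall argument applied to the self-consistency equation \eqref{rec}. The one technical difference is in how the singular weight is absorbed. The paper keeps $\mu\sqrt{\tfrac1a\Delta(\tfrac ya)}$ inside the Gronwall kernel $g_a(y)=\lambda+\mu\sqrt{\tfrac1a\Delta(\tfrac ya)}$ and runs a single Gronwall step, using that $\int_0^{L/2} g_a$ stays bounded uniformly in $a$. You instead pull the singular contributions out as a small multiplicative term $\varepsilon(a)\Phi_a(x)$ (via the pointwise bound $|x|\phi_a(x)\le\nu\sqrt{aC_\Delta}$ and Lemma~\ref{limconv}), absorb it for $a\le a_0$, and treat $a\in[a_0,1)$ separately by bounding $\phi_a$ uniformly. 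Your route is slightly longer but has the virtue of making transparent how the unbounded prefactor $v_a(x)$ in the term $Ev_a(x)\int_0^x\bar\psi_a w_a\,\D{y}$ is tamed --- a point the paper's derivation of the inequality \eqref{eqpsi} glosses over, since the $\phi_a(x)$ piece of that prefactor depends on $x$ rather than on the integration variable and does not literally fit the kernel form $\int_0^x|\bar\psi_a(y)|\,g_a(y)\,\D{y}$.
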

\begin{proof}
Let us start with $w_a(x)$, and restrict to $x\geq 0$ by oddness of $w_a(x)$. Since $v_a(x)\geq 1$ by assumptions on $c,\nu,\sigma$, we have
\begin{equation}
0\leq w_a(x)\leq v_a(x) \int_0^x \D{t}=xv_a(x)\,.
\end{equation}
We then write
\begin{equation}
xv_a(x)=x\bar{v}_a(x)+\nu \sqrt{a}\sqrt{\tfrac{x^2}{a^2}\Delta(\tfrac{x}{a})}\,.
\end{equation}
We have $|\bar{v}_a(x)|\leq 1+cL/4$ for all $x\in[-L/2,L/2]$. Since there exists $M>0$ such that $x^2\Delta(x)<M$ for all $x\geq 0$ by assumption, it follows that for $a<1$
\begin{equation}
|w_a(x)|\leq \tfrac{L}{2}(1+cL/4)+\nu \sqrt{M}\,,
\end{equation}
which is a bound independent of $x$ and $a<1$. We note that from this, we are allowed to use uniform convergence theorem in the expression for $w_a(x)$ and take the limit $a\to 0$ under the integral. This yields the limits $a\to 0$ for $x\neq 0$
\begin{equation}
v_a(x)\to 1+\tfrac{c}{2}|x|\,,\qquad w_a(x)\to x\,.
\end{equation}
Let us now focus on $\bar{\psi}_a(x)$ and \eqref{rec}, and consider again $x>0$. Using Lemma \ref{limconv} and the uniform bound on $w_a(x),\bar{v}_a(x)$ we have for all $x\in[-L/2,L/2]$
\begin{equation}
\underset{a\to 0}{\lim}\, Ev_a(x)\int_0^x \sqrt{\tfrac{1}{a}\Delta(\tfrac{y}{a})}w_a(y)\D{y}-E w_a(x)\int_0^x\sqrt{\tfrac{1}{a}\Delta(\tfrac{y}{a})}\bar{v}_a(y)\D{y}=0\,,
\end{equation}
so this term can be bounded independently of $x\in [-L/2,L/2]$ and $a<1$. We have as well
\begin{equation}
\left| -EA\nu^2 w_a(x)\int_0^x \tfrac{1}{a}\Delta(\tfrac{y}{a})\D{y}+A\bar{v}_a(x)+Bw_a(x)\right|\leq \frac{1}{2}EA\nu^2 \max w+|A|\max \bar{v}+|B|\max w\,,
\end{equation}
where $\max w,\bar{v}$ are the uniform bounds of $w_a(x),\bar{v}_a(x)$. It follows then from \eqref{rec} that there exist constants $\lambda,\mu,\eta>0$ independent of $x\in [-L/2,L/2]$ and $a<1$ such that
\begin{equation}\label{eqpsi}
|\bar{\psi}_a(x)|\leq \int_0^x |\bar{\psi_a}(y)|(\lambda+\mu \sqrt{\tfrac{1}{a}\Delta(\tfrac{y}{a})})\D{y}+\eta\,.
\end{equation}
Let us introduce
\begin{equation}
g_a(x)=\lambda+\mu \sqrt{\tfrac{1}{a}\Delta(\tfrac{y}{a})}\,,
\end{equation}
and
\begin{equation}
G_a(x)=e^{\int_0^x g_a(t)\D{t}}\,,
\end{equation}
as well as $h_a(x)$ defined by
\begin{equation}
h_a(x) G_a(x)=\int_0^x |\bar{\psi}_a(y)|g_a(y)\D{y}\,.
\end{equation}
With these notations, \eqref{eqpsi} translates into
\begin{equation}
h_a'(x)\leq \eta\frac{ g_a(x)}{G_a(x)}\,.
\end{equation}
Hence
\begin{equation}
h_a(x)\leq \eta \int_0^x \frac{g_a(t)}{G_a(t)}\D{t}\,.
\end{equation}
Noting that $G_a(x)\geq 1$ we have then
\begin{equation}
0\leq h_a(x)\leq \eta \int_0^x g_a(t)\D{t}\,.
\end{equation}
The right-hand side has a finite limit $\eta\lambda x$ when $a\to 0$ according to Lemma \ref{limconv}, so it can be bounded independently of $a<1$ and $x\in[-L/2,L/2]$.
\end{proof}
Lemma \ref{uniform} allows one to use dominated convergence theorem and swap the integration and limits $a\to 0$ in \eqref{rec}. Denoting for $x\geq 0$, $\bar{\psi}(x)=\underset{a\to 0}{\lim}\, \bar{\psi}_a(x)$, and noting that
\begin{equation}\label{delta}
\underset{a\to 0}{\lim}\, \int_0^x \tfrac{1}{a}\Delta(\tfrac{y}{a})\D{y}=\frac{1}{2}\sign(x)\,,
\end{equation}
we obtain
\begin{equation}
\begin{aligned}
\bar{\psi}(x)&=E(1+\tfrac{c}{2}|x|)\int_0^x \bar{\psi}(y)y\D{y}-Ex\int_0^x \bar{\psi}(y)(1+\tfrac{c}{2}|y|)\D{y}\\
&-\frac{1}{2}EA\nu^2 |x|+A\left(1+\tfrac{c}{2}|x|\right)+Bx\,.
\end{aligned}
\end{equation}
From this one finds $\bar{\psi}(0+)=\bar{\psi}(0-)=A$ and $\bar{\psi}'(0\pm)=B\pm \tfrac{A}{2}(c-E\nu^2)$. Differentiating twice for $x\neq 0$, one finds moreover $\bar{\psi}''(x)+E\bar{\psi}(x)=0$. Hence this is exactly equivalent to
\begin{equation}\label{new2}
\begin{aligned}
\bar{\psi}''(x)+E\bar{\psi}(x)&=0\quad {\rm for }\,\,x\neq 0\\
\bar{\psi}(0+)-\bar{\psi}(0-)&=0\\
\bar{\psi}'(0+)-\bar{\psi}'(0-)&=(c-E\nu^2)\bar{\psi}(0)\,.
\end{aligned}
\end{equation}
Since for all $x\neq 0$ we have $\psi(x)=\bar{\psi}(x)$, this concludes the proof of the Theorem.

\end{proof}


\subsection{The resulting inner product}
Because of the energy dependence in the amplitude of the cusp in  \eqref{new}, this cannot be put into the form of the standard pointlike interactions whose connection conditions are classified by \eqref{sa}. This could seem contradictory, since \eqref{new} are obtained as the zero-range limit of hermitian Hamiltonians and \eqref{sa} are precisely obtained by imposing hermiticity. To resolve this contradiction, let us consider a particle in a ring $[-L/2,L/2]$ of size $L$ with periodic boundary conditions, interacting with this pointlike potential at position $0$. One finds that the even eigenstates are
\begin{equation}\label{eig}
\psi^{[k]}(x)=\frac{1}{Z_k}\left(\cos(kx)+\alpha(k)\sin(k|x|)\right)\,,
\end{equation}
with
\begin{equation}\label{alpha}
\alpha(k)=\frac{c}{2k}-\frac{k\nu^2}{2}\,,
\end{equation}
with $Z_k$ a normalization factor. They have energy $E_k=k^2$, and periodic boundary conditions impose the following quantization condition on $k$
\begin{equation}\label{quant}
\tan(kL/2)=\alpha(k)\,.
\end{equation}
As for the odd eigenstates, they are
\begin{equation}
\phi^{[k]}(x)=\frac{\sin(kx)}{Z'_k}\,,
\end{equation}
with $Z'_k$ a normalization factor, and where the quantization condition is $\sin(kL/2)=0$. Crucially, one checks that the eigenstates \eqref{eig} are \textit{not} orthogonal with respect to the standard inner product
\begin{equation}
\langle \psi^{[k]}|\psi^{[q]}\rangle\neq 0\quad \text{if } k\neq q\,.
\end{equation}
Yet, for $a>0$, denoting $\psi^{[k]}_a(x)$ the corresponding eigenstate, we certainly have
\begin{equation}
\langle \psi^{[k]}_a|\psi^{[q]}_a\rangle= 0\quad \text{if } k\neq q\,,
\end{equation}
since the $\psi^{[k]}_a$'s are eigenstates of a hermitian Hamiltonian with respect to the standard inner product. In fact, writing $\psi^{[k]}_a=\bar{\psi}^{[k]}_a+A^{[k]}\nu\sqrt{\tfrac{1}{a}\Delta(\tfrac{1}{a})}$ and noting that $\bar{\psi}^{[k]}_a$ converges uniformly from Lemma \ref{uniform}, we have
\begin{equation}
\int_{-L/2}^{L/2}\bar{\psi}^{[k]*}(x)\bar{\psi}^{[q]}(x)\D{x}+A^{[k]*}A^{[q]}\nu^2=0\,.
\end{equation}
Since $A^{[k]}=\psi^{[k]}(0)$, it follows that the eigenstates $\psi^{[k]}$ are orthogonal with respect to the inner product
\begin{equation}\label{inner}
[ \phi|\psi]=\int_{-L/2}^{L/2}\phi^*(x)\psi(x)\D{x}+\nu^2\phi^*(0)\psi(0)\,.
\end{equation}
%
%
Namely, with the normalization factor $Z_k$
\begin{equation}\label{z}
Z_k^2=\frac{L}{2}(1+\alpha(k)^2)+\frac{\alpha(k)}{k}+\nu^2\,,
\end{equation}
we have
\begin{equation}
[\psi^{[k]}|\psi^{[q]}]=\delta_{kq}\,.
\end{equation}
The physical meaning of this inner product is that in the zero-range limit $a\to 0$, the particle has a finite \textit{probability} (and not probability density) to be located at the position of the potential $x=0$, given by
\begin{equation}
\nu^2 |\psi(0)|^2\,.
\end{equation}
The pointlike interaction can thus be thought as an idealized version of a potential well that has a finite probability of capturing a particle.


\subsection{The resulting Hamiltonian \label{hamiy}}
Let us show that the local minima of the energy functional 
\begin{equation}\label{eqham}
H(\psi)=\int_{-L/2}^{L/2}|\psi'(x)|^2\D{x}+c|\psi(0)|^2\,,
\end{equation}
at fixed unit norm $[\psi|\psi]=1$ where this non-standard inner product is defined in \eqref{inner}, satisfy the connection conditions \eqref{new}. Introducing a Lagrange multiplier $E$, we minimize
\begin{equation}
F(\psi,E)=H(\psi)-E([ \psi|\psi]-1)\,.
\end{equation}
We have up to $(\delta\psi)^2$ corrections
\begin{equation}
\begin{aligned}
F(\psi+\delta\psi,E)=&F(\psi,E)-\int_{-L/2}^{L/2}(\psi''(x)+E\psi(x))^*\delta\psi(x)\D{x}\\
&+2\Re[(c\psi(0)-\psi'(0+)+\psi'(0-))^*\delta\psi(0)]\\
&-2E\nu^2\Re[\psi(0)^* \delta \psi(0)]\,.
\end{aligned}
\end{equation}
For this to vanish for all $\delta\psi$, one requires the conditions \eqref{new} indeed.\\


\subsection{Two particles interacting via the pointlike potential \label{2}}
The problem of two particles interacting via a potential straightforwardly reduces to a one-particle problem in the center of mass frame. However, in our case the zero-range limit brings some additional peculiarities that need to be discussed. We thus consider two particles that interact via $V_a(x)$ in \eqref{vax}, i.e. with Hamiltonian \footnote{We omitted the factor $1/2$ in \eqref{eqham} compared to \eqref{hamu} and in the definition in Section \ref{pointlie} to anticipate the two-particle case with a factor $1/2$ which in the center of mass frame reduces to \eqref{schrodinger} without factor $1/2$.}
\begin{equation}\label{hamu}
\begin{aligned}
H_a=&\frac{1}{2}\iint_{-L/2}^{L/2} \left(|\partial_{x_1} \psi(x_1,x_2)|^2+|\partial_{x_2} \psi(x_1,x_2)|^2\right)\D{x_1}\D{x_2}\\
&+\iint_{-L/2}^{L/2} V_a(x_1-x_2)|\psi(x_1,x_2)|^2\D{x_1}\D{x_2}\,.
\end{aligned}
\end{equation}
The local minima $\psi_a(x_1,x_2)$ of this Hamiltonian satisfy the Schr\"odinger equation
\begin{equation}
-\frac{1}{2}(\partial_{x_1}^2+\partial_{x_2}^2) \psi_a+V_a(x_1-x_2)\psi_a=E\psi_a\,,
\end{equation}
with $E$ to determine from the boundary conditions. Writing
\begin{equation}
\psi_a(x_1,x_2)=\phi_a(x)e^{iky}\,,
\end{equation}
with
\begin{equation}
x=x_2-x_1\,,\qquad y=x_1+x_2\,,
\end{equation}
we have that $\phi_a(x)$ satisfies
\begin{equation}
-\phi_a''(x)+V_a(x)\phi_a(x)=(E-k^2)\phi_a(x)\,.
\end{equation}
Hence Theorem \ref{thm} applies and in the limit $a\to 0$, $\phi(x)\equiv \underset{a\to 0}{\lim} \phi_a(x)$ satisfies
\begin{equation}
\phi'(0+)-\phi'(0-)=(c-\nu^2(E-k^2))\phi(0)\,.
\end{equation}
Equivalently, this jump condition can be written
\begin{equation}
\phi'(0+)-\phi'(0-)=c\phi(0)+\nu^2\phi''(0)\,,
\end{equation}
which is, in terms of the original function
\begin{equation}\label{boundbound}
\left.\left[\partial_{x_2}-\partial_{x_1}-c-\frac{\nu^2}{4}(\partial_{x_2}-\partial_{x_1})^2\right] \psi(x_1,x_2)\right|_{x_2=x_1+}=0\,.
\end{equation}
In this limit, the eigenstates of the Hamiltonian are orthogonal with respect to the inner product
\begin{equation}
[ \phi|\psi]=\iint_{-L/2}^{L/2}\phi^*(x_1,x_2)\psi(x_1,x_2)\D{x_1}\D{x_2}+\nu^2\int_{-L/2}^{L/2}\phi^*(x,x)\psi(x,x)\D{x}\,.
\end{equation}
Physically, the additional piece proportional to $\nu^2$ in this inner product corresponds to a finite probability that the two particles form a pointlike pair of particles. However this bound state is not stable as we will see in Section \ref{timeevolution}.

The expression of the Hamiltonian brings however an important difference compared to the one-particle case, as is not the same as for a simple $\delta$ interaction \eqref{eqham}. Let us show that the local minima of the following Hamiltonian
\begin{equation}
\begin{aligned}
H(\psi)&=\frac{1}{2}\iint_{-L/2}^{L/2} \left( |\partial_{x_1} \psi(x_1,x_2)|^2+|\partial_{x_2} \psi(x_1,x_2)|^2\right)\D{x_1}\D{x_2}\\
&+\frac{\nu^2}{4} \int_{-L/2}^{L/2} |(\partial_{x_1}+\partial_{x_2})\psi(x,x)|^2\D{x}\\
&+c\int_{-L/2}^{L/2}|\psi(x,x)|^2\D{x}
\end{aligned}
\end{equation}
at fixed unit norm $[\psi|\psi]=1$ implement the boundary conditions \eqref{boundbound}. To that end, we introduce a Lagrange multiplier $E$ and minimize
\begin{equation}
F(\psi,E)=H(\psi)-E([\psi|\psi]-1)\,.
\end{equation}
We have up to $\mathcal{O}(\delta\psi^2)$ corrections, with simplified notations $\partial_j=\partial_{x_j}$ and where the integrations are on $x_1,x_2$ in $[-L/2,L/2]\times [-L/2,L/2]$ 
\begin{equation}
\begin{aligned}
F(\psi+\delta \psi,E)-F(\psi,E)=&-2\Re \int\left[E\psi^* \delta\psi-\frac{1}{2} \partial_1 \psi^* \partial_1 \delta\psi- \frac{1}{2} \partial_2\psi^* \partial_2 \delta\psi\right]\\
&+2\Re \int\delta(x_1-x_2)\left[ (c-E\nu^2) \psi^*\delta\psi+\frac{\nu^2}{4} (\partial_1+\partial_2)\psi^* (\partial_1+\partial_2) \delta\psi  \right]\,.
\end{aligned}
\end{equation}
Performing integration by parts to remove all derivatives of $\delta\psi$, we have
\begin{equation}
\begin{aligned}
F(\psi+\delta \psi,E)-F(\psi,E)=&-2\Re \int\left[E\psi^* +\frac{1}{2} \partial^2_1 \psi^*+\frac{1}{2} \partial^2_2 \psi^*\right]\delta\psi\\
&+\Re \int  \partial_1 \psi^* \delta\psi [\delta(x_1-(x_{2}+))-\delta(x_1-(x_{2}-))]\\
&+\Re \int \partial_2 \psi^* \delta\psi [\delta(x_2-(x_{1}+))-\delta(x_2-(x_{1}-))]\\
&+2\Re \int\delta(x_1-x_2)\left[ (c-E\nu^2) \psi^*-\frac{\nu^2}{4} (\partial_1+\partial_2)^2\psi^*   \right]\delta\psi\,.
\end{aligned}
\end{equation}
Using that $\psi$ is symmetric in its arguments, this is
\begin{equation}
\begin{aligned}
F(\psi+\delta \psi,E)&-F(\psi,E)=-2\Re \int\left[E\psi^* +\frac{1}{2} \partial^2_1 \psi^*+\frac{1}{2} \partial^2_2 \psi^*\right]\delta\psi\\
&+2\Re \int\delta(x_1-x_2)\left[ (\partial_1-\partial_2)\psi^*+(c-E\nu^2) \psi^*-\frac{\nu^2}{4} (\partial_1+\partial_2)^2\psi^*   \right]\delta\psi\,.
\end{aligned}
\end{equation}
Hence imposing the cancellation of the $\delta\psi$ terms yields
\begin{equation}
\begin{aligned}
&-\frac{1}{2}(\partial^2_1 +\partial^2_2)\psi=E\psi\\
&\left.\left[\partial_{2}-\partial_{1}-c+E\nu^2+\frac{\nu^2}{4}(\partial_{2}+\partial_{1})^2\right] \psi(x_1,x_2)\right|_{x_2=x_1+}=0\,.
\end{aligned}
\end{equation}
Using the first line, the second line is exactly the connection condition \eqref{boundbound}.\\

We see that there is an additional kinetic term proportional to $\nu^2$ in the Hamiltonian. Physically, this term corresponds to the kinetic energy of the pointlike pair of particles that can be formed.

\subsection{Dynamics \label{timeevolution}}
In order to illustrate some properties of the pointlike interaction constructed in Theorem \ref{thm}, we consider the simple problem of the dynamics of a particle in a ring $[-L/2,L/2]$ interacting with the pointlike potential at $x=0$, or equivalently two particles interacting with each other via the pointlike potential.

Given an initial condition $\psi^{\rm ini}(x)$, the usual way to compute its time evolution is to decompose it along the eigenstates of the Hamiltonian. However in our case the unusual inner product requires some precautions. The safe way to understand the decomposition with this unusual inner product is to work at fixed $a>0$ and take the limit $a\to 0$. Let us thus consider an $a$-dependent initial condition $\psi^{\rm ini}_a(x)$, that we will assume to be even for simplicity of the discussion, and that we decompose along the eigenstates at $a>0$
\begin{equation}
\psi^{\rm ini}_a(x)=\sum_k \psi^{[k]}_a(x) \langle \psi^{[k]}_a| \psi^{\rm ini}_a\rangle\,.
\end{equation}
The question is then the limit $a\to 0$ of the overlap $\langle \psi^{[k]}_a| \psi^{\rm ini}_a\rangle$. As seen from Lemma \ref{uniform}, $\psi^{[k]}_a(x)$ contains a piece $\psi^{[k]}(0) \nu \sqrt{\tfrac{1}{a}\Delta(\tfrac{x}{a})}$ that does not converge uniformly. If $\psi^{\rm ini}_a$ itself converges uniformly to $\psi^{\rm ini}$, then from Lemma \ref{limconv} we have
\begin{equation}
\underset{a\to 0}{\lim}\, \langle \psi^{[k]}_a| \psi^{\rm ini}_a\rangle=\langle \psi^{[k]}| \psi^{\rm ini}\rangle\,.
\end{equation}
This leads to the time evolution in the $a\to 0$ limit
\begin{equation}
\psi(x,t)=\sum_k \psi^{[k]}(x) \langle \psi^{[k]}| \psi^{\rm ini}\rangle e^{-itE_k}\,,
\end{equation}
with an inner product that is \textit{not} the one for which the eigenstates $\psi^{[k]}$ are orthogonal. However, if the initial condition $\psi^{\rm ini}_a$ contains itself a piece of the form $\psi^{\rm ini}(0)\nu \sqrt{\tfrac{1}{a}\Delta(\tfrac{x}{a})}$, then
\begin{equation}
\underset{a\to 0}{\lim}\, \langle \psi^{[k]}_a| \psi^{\rm ini}_a\rangle=[\psi^{[k]}| \psi^{\rm ini}]\,,
\end{equation}
and the time evolution is written in terms of the modified inner product
\begin{equation}\label{psipsi}
\psi(x,t)=\sum_k \psi^{[k]}(x) [ \psi^{[k]}| \psi^{\rm ini}] e^{-itE_k}\,.
\end{equation}

Let us illustrate the second situation with the following problem. We would like to consider the case in which the particle is located at $x=0$ with probability $1$ at time $t=0$. This can be achieved with  $\psi^{\rm ini}_a(x)=\sqrt{\tfrac{1}{a}\Delta(\tfrac{x}{a})}$.
%
One finds then
\begin{equation}
\underset{a\to 0}{\lim}\, \langle \psi^{[k]}_a| \psi^{\rm ini}_a\rangle=\frac{\nu}{Z_k}\equiv [\psi^{[k]}|\psi^{\rm ini}]\,,
\end{equation}
where $Z_k$ is given in \eqref{z}. For finite $L$, the non-trivial quantization condition \eqref{quant} has to be solved numerically and \eqref{psipsi} evaluated numerically. This way one can check for example the completeness relation $\sum_k |[\psi^{\rm ini}|\psi^{[k]}]|^2=1$ that holds for the non-standard inner product $[.|.]$. As an illustration, we plot in the right panel of Figure \ref{potential1} the probability $p(t)$ that the particle is exactly located at $x=0$ (and not just probability density) given by
\begin{equation}
p(t)=\nu^2 |\psi(0,t)|^2\,.
\end{equation}
%
%
%
The case $L\to\infty$ deserves attention since it is tractable. In this limit the quantization condition becomes
\begin{equation}
k=\frac{2\pi (n+1/2)}{L}+\mathcal{O}(L^{-1})\,,
\end{equation}
with $n\geq 0$ an integer. The time evolution of the wave function can then be analytically computed in a straightforward way and is given by
\begin{equation}
\psi(x,t)=\frac{\nu}{\pi} \int_0^{\infty}\frac{e^{-itk^2}}{1+\alpha(k)^2}\left(\cos(kx)+\alpha(k)\sin(k|x|) \right)\D{k}\,,
\end{equation}
where $\alpha(k)$ is defined in \eqref{alpha}. In particular, the probability $p(t)$ of finding the particle exactly at $x=0$ is 
\begin{equation}
p(t)=\frac{\nu^4}{\pi^2}\left|  \int_0^{\infty}\frac{e^{-itk^2}}{1+\alpha(k)^2}\D{k}\right|^2\,.
\end{equation}
We see that it goes to $0$ when $t\to\infty$ as $t^{-3}$. This means that the pointlike pair of particles formed by two particles interacting via this pointlike potential is unstable and eventually decays with probability $1$.

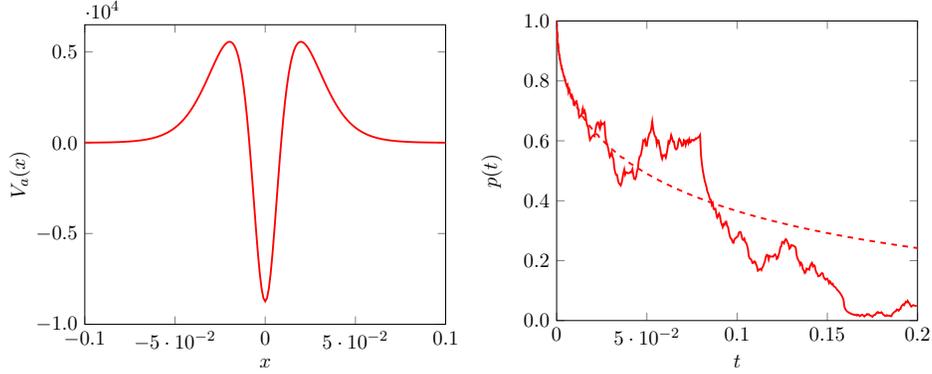
\begin{figure}[h!]
\begin{center}
\begin{tikzpicture}[scale=0.7]
\begin{axis}[
    enlargelimits=false,
    xlabel = $x$,
    ylabel=$V_a(x)$,
    xmax=0.1,
    ymax=6500,
    ymin=-10000,
     y tick label style={
        /pgf/number format/.cd,
            fixed,
            fixed zerofill,
            precision=1,
        /tikz/.cd
    }
]
\addplot[
    line width=1pt,
    color=red,
    opacity=1]
table[
           x expr=\thisrowno{0}, 
           y expr=\thisrowno{1},
         ]{potential.dat};
\end{axis}
\end{tikzpicture}
\begin{tikzpicture}[scale=0.7]
\begin{axis}[
    enlargelimits=false,
    xlabel = $t$,
    ylabel=$p(t)$,
    xmax=0.2,
    ymax=1,
    ymin=0,
     y tick label style={
        /pgf/number format/.cd,
            fixed,
            fixed zerofill,
            precision=1,
        /tikz/.cd
    }
]
\addplot[
    line width=1pt,
    color=red,
    opacity=1]
table[
           x expr=\thisrowno{0}, 
           y expr=\thisrowno{1},
         ]{probability_L0.dat};
\addplot[
    line width=1pt,
    color=red,
    dashed,
    opacity=1]
table[
           x expr=\thisrowno{0}, 
           y expr=\thisrowno{1},
         ]{probability_Linf.dat};
\end{axis}
\end{tikzpicture}
\caption{Left: plot of the regulated potential \eqref{vax} with $c=1$, $\nu=1$, $a=0.01$ and with the functions $\sigma(x)=\tanh(x)$, $\Delta(x)=\tfrac{1}{2}\tanh'(x)$. Right: probability of finding the particle at $x=0$ as a function of time, with $c=1$, $\nu=1$, for a system size $L=1$ (plain) and $L=\infty$ (dashed) [see Section \ref{timeevolution}].} 
\label {potential1}
\end{center}
\end {figure}


\subsection{Energy-dependent potentials with a finite support}
Let us now consider a particle interacting with $N$ pointlike interactions of the type of Theorem \ref{thm} centered at positions $x_n$ with coupling strength $c_n$ and energy-dependence $\nu_n$, for $n=1,...,N$. The stationary states $\psi_N$ are continuous functions that satisfy
\begin{equation}\label{freesh}
\psi_N''(x)+E\psi_N(x)=0\,,\qquad \text{for }x\notin \{x_1,...,x_N\}\,,
\end{equation}
and
\begin{equation}
\psi'_N(x_n+)-\psi_N'(x_n-)=(c_n-E\nu_n^2)\psi_N(x_n)\,.
\end{equation}
We consider now $x_n= -L/2+(n-1/2)\frac{L}{N}$, couplings $c_n=\frac{L}{N}\mathcal{V}(x_n)$ where $\mathcal{V}(x)\geq 0$ is a fixed arbitrary potential, and $\nu^2_n=\frac{L}{N}\nu^2\mathcal{V}(x_n)$ with some $\nu\geq 0$. Between two points $x_n,x_{n+1}$, $\psi_N$ satisfies \eqref{freesh} so that
\begin{equation}
\psi'_N(x_{n+1}-)=\psi_N'(x_n+)-E\psi_N(x_n) \frac{L}{N}+\mathcal{O}(N^{-2})\,.
\end{equation}
Hence
\begin{equation}
\frac{\psi'_N(x_{n+1}-)-\psi'_N(x_n-)}{L/N}=-E\psi_N(x_n)+\mathcal{V}(x_n)(1-E\nu^2)\psi_N(x_n)+\mathcal{O}(N^{-1})\,.
\end{equation}
It follows that in the limit $N\to\infty$, $\psi(x)\equiv \psi_{\infty}(x)$ satisfies
\begin{equation}
-\psi''(x)+\mathcal{V}(x)(1-E\nu^2)\psi(x)=E\psi(x)\,,
\end{equation}
which is a Schr\"odinger equation in a (non pointlike) potential $\mathcal{V}(x)$ with an energy-dependent coupling. This construction, together with Theorem \ref{thm}, provides a regularization of energy-dependent potentials in terms of hermitian, regular potentials in non-relativistic 1D quantum mechanics.

\section{Energy-dependent pointlike interaction between fermions \label{running2}}
\subsection{The potential}
The potential presented in Section \ref{running} is transparent for fermions. We recall that the only non-trivial standard pointlike interaction between fermions induces a discontinuity in the wave function $\psi$ itself, namely
\begin{equation}\label{ferm}
\psi(0+)-\psi(0-)=2\beta \psi'(0)\,,
\end{equation}
with $\beta$ some coupling constant. The continuity of the derivative at $0$ is ensured by the oddness of the wave function for fermions. We will call $\eta$ this pointlike interaction. Depending on the regularization, this $\eta$ potential can act like an infinite wall for bosons \cite{granet2022duality,granet2022regularization} or be transparent \cite{Cheon:1997rx}.\\

The second energy-dependent pointlike interaction that we present in this paper acts like an infinite wall for bosons, and like a $\eta$ potential with an energy-dependent coupling for fermions. Its construction is given by the following Theorem.

\begin{theorem}\label{thm2}
We consider $c,\nu,\sigma,\Delta$ as defined in Theorem \ref{thm}, with moreover $\sigma'(0)>0$. For $E$ real and $a>0$, we define $\psi_a(x)$ as the solution to
\begin{equation}\label{shrod22}
-\psi_a''(x)+V_a(x)\psi_a(x)=E\psi_a(x)\,,
\end{equation}
with some fixed $a$-independent boundary conditions at $x=1$, and where the potential is
\begin{equation}\label{vax2}
V_a(x)=\frac{v_a''(x)}{v_a(x)}\,,\qquad v_a(x)=x+\frac{2}{c}\sigma(\tfrac{x}{a})+\nu\sigma(\tfrac{x}{a})\sqrt{\tfrac{1}{a}\Delta(\tfrac{x}{a})}\,.
\end{equation}
The limit $\psi(x)\equiv \underset{a\to 0}{\lim}\, \psi_a(x)$ satisfies
\begin{equation}\label{newf}
\begin{aligned}
&\psi''(x)+E\psi(x)=0\quad {\rm for }\,\,x\neq 0\\
&\left(\begin{matrix}\psi(0+)\\ \psi'(0+)\end{matrix}\right)=-\left(\begin{matrix}1&0\\ c-E\nu^2&1\end{matrix}\right)\left(\begin{matrix}\psi(0-)\\ \psi'(0-)\end{matrix}\right)
\end{aligned}
\end{equation}
\end{theorem}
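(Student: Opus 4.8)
The plan is to follow the proof of Theorem \ref{thm} almost line by line, keeping track of the modifications forced by the new shape of $v_a$. The first point is that, since $\sigma$ is odd, the function $v_a(x)=x+\tfrac{2}{c}\sigma(\tfrac{x}{a})+\nu\sigma(\tfrac{x}{a})\sqrt{\tfrac{1}{a}\Delta(\tfrac{x}{a})}$ is now \emph{odd}, so $v_a(0)=0$ and the potential $V_a=v_a''/v_a$ is \emph{even}. This is exactly where the extra hypothesis $\sigma'(0)>0$ enters: it guarantees $v_a'(0)>0$, i.e.\ that $v_a$ has a simple zero at the origin, so that $V_a(x)=v_a''(x)/v_a(x)$ stays finite as $x\to0$ and $V_a$ is smooth on the whole of $[-L/2,L/2]$ for every $a>0$, making \eqref{shrod22} well posed exactly as in Theorem \ref{thm}.

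Next I would set up the self-consistency equation. Two independent solutions of \eqref{shrod22} at $E=0$ are $v_a$ (odd, vanishing at the origin) and a second solution $u_a$; away from the origin the latter can be written $u_a(x)=v_a(x)\int_{x_0}^x\tfrac{\D{t}}{v_a(t)^2}$, and it is continued through $x=0$ by the equation itself (the integral diverges at $0$ since $v_a$ vanishes there, but $u_a$ stays smooth, with $u_a(0)\neq0$). Normalizing the Wronskian of $\{v_a,u_a\}$ to $1$, variation of parameters gives the same kind of self-consistency integral equation for $\psi_a$ as in Theorem \ref{thm}, now with kernel built from $v_a$ and $u_a$, and the same linearity argument lets me take the two integration constants $A,B$ to be independent of $a$. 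The piece of $\psi_a$ that fails to converge uniformly is now $A\nu\sigma(\tfrac{x}{a})\sqrt{\tfrac{1}{a}\Delta(\tfrac{x}{a})}$ (the singular part of $v_a$), so I would subtract it off and work with $\bar\psi_a(x)=\psi_a(x)-A\nu\sigma(\tfrac{x}{a})\sqrt{\tfrac{1}{a}\Delta(\tfrac{x}{a})}$.

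The estimates then parallel Lemmas \ref{limconv} and \ref{uniform}. Lemma \ref{limconv} is used verbatim, and in particular it also gives $\lim_{a\to0}\int f_a(x)\sigma(\tfrac{x}{a})\sqrt{\tfrac{1}{a}\Delta(\tfrac{x}{a})}\D{x}=0$ for any bounded family $f_a$. The uniform boundedness of $u_a$ and $\bar\psi_a$ on $[-L/2,L/2]$ for all small $a$ would then follow from the same Gr\"onwall-type argument as in Lemma \ref{uniform}, once $u_a$ and the Green's kernel $v_a(x)u_a(y)-u_a(x)v_a(y)$ are controlled near the origin. Granted these bounds, dominated convergence lets me pass to the $a\to0$ limit in the self-consistency equation, using $\sigma(\tfrac{x}{a})\to\sign(x)$, $\int_0^x\tfrac{1}{a}\Delta(\tfrac{y}{a})\D{y}\to\tfrac12\sign(x)$ and the analogous limit of $\int_0^x\sigma(\tfrac{y}{a})^2\tfrac1a\Delta(\tfrac{y}{a})\D{y}$, together with the limiting fundamental solutions $v_a(x)\to x+\tfrac{2}{c}\sign(x)$ and $u_a(x)\to u(x)$. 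Reading off the resulting limiting integral equation gives \eqref{newf}: the jump of $v_a$'s limit across the origin is responsible for the overall minus sign of the matrix, the $\sigma$- and $\Delta$-concentrated contributions near $x=0$ supply the off-diagonal coefficient $c-E\nu^2$, and differentiating twice gives $\psi''+E\psi=0$ for $x\neq0$. Equivalently, the whole computation amounts to evaluating the $a\to0$ limit of the transfer matrix of \eqref{shrod22} across an $O(a)$-window at the origin, which one checks equals the matrix in \eqref{newf}.

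The step I expect to be the main obstacle is precisely the behaviour near the origin. In Theorem \ref{thm} one had $v_a\geq1$ throughout $[-L/2,L/2]$, which made every bound immediate; here $v_a$ vanishes at $0$ and $V_a$ becomes large and concentrated on the scale $\sim a$ there, so one must show that $u_a$ and the Green's kernel remain bounded uniformly in $a$ on a shrinking neighbourhood of $0$ — this is again where $\sigma'(0)>0$ is needed, through the simple zero of $v_a$ — and that the contribution of an $O(a)$-window around $0$ to the self-consistency equation converges to the claimed boundary term. Once these uniform estimates near $x=0$ are in hand, the remainder of the argument is identical to the proof of Theorem \ref{thm}.
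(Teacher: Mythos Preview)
Your proposal is correct and follows essentially the same route as the paper: construct the odd and even $E=0$ solutions, write the self-consistency integral equation, subtract off the singular piece $A\nu\sigma(\tfrac{x}{a})\sqrt{\tfrac{1}{a}\Delta(\tfrac{x}{a})}$, prove uniform bounds by the same Gr\"onwall-type argument, and pass to the limit.

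The one place where the paper does something sharper than your outline is precisely at the step you flag as the main obstacle. Rather than writing the second solution as $u_a(x)=v_a(x)\int_{x_0}^x\tfrac{\D{t}}{v_a(t)^2}$ away from the origin and then arguing it continues smoothly through $x=0$, the paper writes the even solution directly as
\[
w_a(x)=\frac{1}{v_a'(x)}+v_a(x)\int_0^x\frac{v_a''(y)}{v_a(y)\,v_a'(y)^2}\,\D{y}\,,
\]
which is manifestly regular at the origin (since $v_a'(0)>0$ by the hypothesis $\sigma'(0)>0$) and has $w_a(0)=1/v_a'(0)$, $w_a'(0)=0$. With this explicit representation the uniform-in-$a$ bound on $w_a$ becomes a direct estimate rather than a continuation argument, and the rest of the proof proceeds exactly as in Theorem~\ref{thm}. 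So your identification of the obstacle is right, and the paper's device for removing it is simply this alternative closed form for the second fundamental solution.
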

Before entering the proof, let us make the following comment on the connection conditions \eqref{newf}.  If the wave function describes two fermions interacting via $V_a(x)$ in the center of mass frame, then $\psi(x)$ has to be odd, so $\psi(0+)=-\psi(0-)$ and $\psi'(0+)=\psi'(0-)$. This leads to the jump condition
\begin{equation}\label{newf2}
\begin{aligned}
\psi(0+)-\psi(0-)&=\frac{4}{c-E\nu^2}\psi'(0)\,,
\end{aligned}
\end{equation}
which is exactly \eqref{ferm} with an energy-dependent coupling. For bosons, $\psi$ has to be even, so $\psi(0+)=\psi(0-)$ and $\psi'(0+)=-\psi'(0-)$. The connection condition \eqref{newf} then implies $\psi(0+)=\psi(0-)=0$, corresponding to hard core bosons. Hence the pointlike potential acts like an infinite wall for bosons.

We also note that under the Girardeau mapping $\psi(x)=\sign(x)\phi(x)$ \cite{girardeau1960relationship,cheon1999fermion,Yukalov2005FermiBoseMF}, if a wave function $\psi$ satisfies \eqref{new}, then $\phi$ satisfies \eqref{newf}. This shows that the pointlike interaction of Theorem \ref{thm2} is dual to that of Theorem \ref{thm} under the Bose-Fermi duality that is the Girardeau mapping in 1D.

\begin{proof}[Proof of Theorem \ref{thm2}]
The proof goes along the same lines as for Theorem \ref{thm}, so that we will only give the main steps.

One first determines two independent solutions to \eqref{shrod22} when $E=0$, that are $v_a(x)$ that is odd, and
\begin{equation}
w_a(x)=\frac{1}{v_a'(x)}+v_a(x)\int_{0}^x \frac{v_a''(y)}{v_a(y)v_a'(y)^2}\D{y}\,,
\end{equation}
that is even. A study of $w_a(x)$ using the assumption $\sigma'(0)>0$ shows that it can be bounded independently of $a<1$ and $x\in [-L/2,L/2]$, and yields the limits when $a\to 0$
\begin{equation}
v_a(x)\to x+\tfrac{c}{2}\sign(x)\,,\qquad w_a(x)\to -\frac{2}{c}|x|\,.
\end{equation}
Then using the same arguments, $\psi_a$ satisfies the following self-consistency equation
\begin{equation}
\begin{aligned}
\psi_a(x)=Ev_a(x)\int_0^x \psi_a(y)w_a(y)\D{y}-Ew_a(x)\int_0^x \psi_a(y)v_a(y)\D{y}+Av_a(x)+Bw_a(x)\,.
\end{aligned}
\end{equation}
with $A,B$ integration constants that can be assumed to be independent of $a$ up to a global rescaling. Then one defines
\begin{equation}
\bar{\psi}_a(x)=\psi_a(x)-A\nu\sigma(\tfrac{x}{a})\sqrt{\tfrac{1}{a}\Delta(\tfrac{x}{a})}\,,
\end{equation}
and shows similarly to Lemma \ref{uniform} that it can be bounded independently of $a<1$ and $x\in[-L/2,L/2]$. This allows one to take the limit $a\to 0$ in the self-consistency equation when rewritten in terms of $\bar{\psi}_a$. Using then again \eqref{delta}, one shows straightforwardly that the resulting self-consistency equation on $\bar{\psi}=\underset{a\to 0}{\lim} \bar{\psi}_a$ is equivalent to the connection conditions \eqref{newf}.

\end{proof}
\color{black}

\subsection{The resulting inner product and Hamiltonian}
In a ring $[-L/2,L/2]$ with periodic boundary conditions, the even eigenstates of the Hamiltonian with a pointlike interaction at $x=0$ are
\begin{equation}
\psi^{[k]}(x)=\frac{\sin (k |x|)}{Z_k'}\,,
\end{equation}
with $Z'_k$ a normalization factor, and where the quantization condition is $\cos(kL/2)=0$. The odd eigenstates are
\begin{equation}
\phi^{[k]}(x)=\frac{1}{Z_k}(\sign(x)\cos(kx)+\alpha(k) \sin (k x))\,,
\end{equation}
with $Z_k$ a normalization factor and $\alpha(k)$ as in \eqref{alpha}, and with
\begin{equation}\label{quant2}
\tan(kL/2)=-\frac{1}{\alpha(k)}\,.
\end{equation}
The eigenstates are orthogonal with respect to the inner product
\begin{equation}\label{inner2}
[ \phi|\psi]=\int_{-L/2}^{L/2}\phi^*(x)\psi(x)\D{x}+\frac{\nu^2}{4}(\phi^*(0+)-\phi^*(0-))(\psi(0+)-\psi(0-))\,.
\end{equation}
As for the resulting Hamiltonian, one obtains with a similar calculation as in Section \ref{hamiy} that the eigenstates are produced as local minima of the energy functional
\begin{equation}
H(\psi)=\int_{-L/2}^{L/2}|\psi'(x)|^2\D{x}+\frac{c}{4}|\psi(0+)-\psi(0-)|^2\,,
\end{equation}
at fixed unit norm $[ \psi|\psi] =1$.


\section*{Summary and perspectives}
In this paper we constructed a potential $V_a(x)$ with range $a$ that converges in the zero-range limit $a\to 0$ to the pointlike Dirac $\delta$ potential with an energy-dependent coupling. We constructed as well a potential that converges to the fermionic "$\delta'$" potential with an energy-dependent coupling. These results are noteworthy for several reasons. 

Firstly, it reveals a new class of pointlike interactions in 1D quantum mechanics, that were previously thought to be classified into a four-parameter family. The inconsistency is resolved as follows. This classification relied on the idea that the Hamiltonian for a pointlike potential should be hermitian for the standard inner product. But we show that our Hamiltonian, despite being indeed hermitian for the standard inner product when $a>0$, becomes hermitian with respect to a different inner product in the limit $a\to 0$, which allows for the pointlike interaction studied in this paper. This raises thus the question of classifying all the possible pointlike interactions in 1D that can be obtained through the zero-range limit of regular potentials. For example, one could ask whether it is possible to obtain a $\delta$ interaction with a more general $E$ dependence in the coupling than what is constructed in this paper. One could also ask whether such construction can be performed in higher dimensions.

Secondly, this pointlike interaction can be interpreted as a pointlike well that has a finite probability of capturing a particle. Indeed, the resulting inner product for which it is hermitian attributes a finite probability (and not just probability density) for the particle to be located at the position of the potential. This pointlike interaction can thus model an inter-particle force that allows for the formation of a pair of particles with a finite probability (although such pairs are unstable). This model could find applications to describe the effect of formation of diatomic molecules in actual Bose gases \cite{bouchoule2020the}. A natural question in that direction is whether a Hamiltonian for $N$ particles interacting via this pointlike interaction enjoy an inner product with a similar physical interpretation. 


Thirdly, this potential shows that running couplings, i.e. energy-dependent couplings are not a specificity of quantum field theories or relativistic quantum mechanics, but also exist in 1D non-relativistic quantum mechanics. To the best of our knowledge this observation has not been made before. It is certainly interesting to investigate whether relations between these pointlike interactions and quantum field theories can be drawn beyond this observation.

\bibliography{bibliographie}

%
%
%
%
%

\end{document}